\let\MYcaption\@makecaption
\let\@makecaption\MYcaption
\newtheorem{theorem}{Theorem}
\begin{document}


\title{%
Independent Low-Rank Matrix Analysis
Based on Time-Variant Sub-Gaussian Source Model
}


\author{%
\authorblockN{%
Shinichi Mogami\authorrefmark{1},
Norihiro Takamune\authorrefmark{1},
Daichi Kitamura\authorrefmark{2},
Hiroshi Saruwatari\authorrefmark{1},\\
Yu Takahashi\authorrefmark{3},
Kazunobu Kondo\authorrefmark{3},
Hiroaki Nakajima\authorrefmark{3},
and Nobutaka Ono\authorrefmark{4}
}
\authorblockA{%
\authorrefmark{1}
The University of Tokyo,
Tokyo, Japan
}
\authorblockA{%
\authorrefmark{2}
National Institute of Technology, Kagawa College,
Kagawa, Japan
}
\authorblockA{%
\authorrefmark{3}
Yamaha Corporation,
Shizuoka, Japan
}
\authorblockA{%
\authorrefmark{4}
Tokyo Metropolitan University,
Tokyo, Japan
}
}

\maketitle
\thispagestyle{empty}


\begin{abstract}
  Independent low-rank matrix analysis (ILRMA)
  is a fast and stable method for blind audio source separation.
  Conventional ILRMAs assume time-variant (super-)Gaussian source models,
  which can only represent signals that follow a super-Gaussian distribution.
  In this paper,
  we focus on ILRMA based on a generalized Gaussian distribution (GGD-ILRMA)
  and propose a new type of GGD-ILRMA that adopts a time-variant sub-Gaussian distribution for the source model.
  By using a new update scheme called
  generalized iterative projection for homogeneous source models,
  we obtain a convergence-guaranteed update rule for demixing spatial parameters.
  In the experimental evaluation,
  we show the versatility of the proposed method,
  i.e., the proposed time-variant sub-Gaussian source model
  can be applied to various types of source signal.

\end{abstract}


\section{Introduction}

Blind source separation (BSS)%
~\cite{PComon1994_ICA,PSmaragdis1998_BSS,%
Sawada2004_FDICA,HSaruwatari2006_FDICA,%
AHiroe2006_IVA,TKim2007_IVA,NOno2011_AuxIVA,Yatabe2018_proxBSS}
is a technique
for extracting specific sources from an observed multichannel mixture signal
without knowing a priori information about the mixing system.
The most commonly used algorithm for BSS
in the (over)determined case ($\text{number of microphones}\geq\text{number of sources}$)
is independent component analysis~(ICA)~\cite{PComon1994_ICA}.
As a state-of-the-art ICA-based BSS method,
Kitamura et al.\ proposed
\emph{independent low-rank matrix analysis~(ILRMA)}%
~\cite{DKitamura2016_ILRMA,Kitamura2018_ILRMAbook},
which is a unification of
independent vector analysis~(IVA)~\cite{AHiroe2006_IVA,TKim2007_IVA}
and nonnegative matrix factorization~(NMF)~\cite{DDLee1999_NMF}.
ILRMA assumes both statistical independence between sources and a low-rank time-frequency structure for each source,
and the frequency-wise demixing systems are estimated
without encountering the permutation problem%
~\cite{Sawada2004_FDICA,HSaruwatari2006_FDICA}.
ILRMA is a faster and more stable algorithm than
multichannel NMF (MNMF)~\cite{AOzerov2010_MNMF,Ozerov2011_ISMNTF,HSawada2013_MNMF},
which is an algorithm for BSS
that estimates the mixing system on the basis of spatial covariance matrices.

The original ILRMA based on Itakura--Saito (IS) divergence
assumes a time-variant isotropic complex Gaussian distribution
for the source generative model.
Hereafter, we refer to the original ILRMA as \emph{IS-ILRMA}.
Recently, various types of source generative model
have been proposed in ILRMA for robust BSS.
In particular,
$t$-ILRMA~\cite{Mogami2017_tILRMA} and GGD-ILRMA~\cite{Kitamura2018_GGDILRMA,Ikeshita2018_GGDILRMA}
have been proposed as generalizations of IS-ILRMA
with a complex Student's $t$ distribution
and a complex generalized Gaussian distribution (GGD), respectively.
In $t$-ILRMA and GGD-ILRMA,
the kurtosis of the generative models' distributions
can be parametrically changed
along with
the degree-of-freedom parameter in Student's $t$ distribution
and
the shape parameter in the GGD.
By changing the kurtosis of the distributions,
we can control how often the source signal outputs outliers
or its expected sparsity.
In particular, in sub-Gaussian models,
i.e.,\ models that follow distributions with a \emph{platykurtic} shape,
the source signal rarely outputs outliers.
Therefore, the sub-Gaussian modeling of sources is expected to
accurately estimate the source spectrogram without ignoring
its important spectral peaks.
Furthermore, many audio sources follow platykurtic distributions;
it is known that
musical instrument signals obey
sub-Gaussian distributions~\cite{Naik2012_MusicDst}.


However, neither conventional $t$-ILRMA nor GGD-ILRMA
assumes that the source model follows
a sub-Gaussian distribution.
Both $t$-ILRMA and GGD-ILRMA can adopt only a super-Gaussian (or Gaussian) model,
i.e.,\ models that follow distributions with a \emph{leptokurtic} shape,
for the source model.
In $t$-ILRMA,
this is because the complex Student's $t$ distribution
becomes only super-Gaussian for any degree-of-freedom parameter.
In GGD-ILRMA, on the other hand, it is because
the estimation algorithm for the demixing matrix
has not yet been derived for a sub-Gaussian case,
although the GGD itself can represent a sub-Gaussian distribution
depending on its shape parameter.
More specifically, the conventional
\emph{iterative projection (IP)}~\cite{NOno2011_AuxIVA},
which is an algorithm that updates the demixing matrix mainly used in
IVA and ILRMA,
cannot be applied to sub-Gaussian-based GGD-ILRMA
owing to mathematical difficulties.


In this paper, we propose a new type of ILRMA
that assumes time-variant sub-Gaussian source models.
This paper includes three novelties.
First,
we construct a new update scheme for the demixing matrix
called \emph{generalized IP for homogeneous source models (GIP-HSM)}.
Second, we derive a convergence-guaranteed update rule for the demixing matrix
in GGD-ILRMA with a shape parameter of four.
To the best of our knowledge, this is the world's first attempt
to model the source signal with a time-variant sub-Gaussian distribution,
and we derive the update rule by applying the above-mentioned scheme.
Third, we show the validity of the proposed sub-Gaussian GGD-ILRMA
via BSS experiments on music and speech signals.
We confirm that the proposed method is a versatile approach
to source modeling,
i.e.,
the proposed time-variant sub-Gaussian model
can represent super-Gaussian or Gaussian signals
as well as sub-Gaussian signals owing to its time-variant nature,
whereas the conventional models can only represent
super-Gaussian or Gaussian signals.

\section{Problem Formulation}

\subsection{Formulation of Demixing Model}
Let $N$ and $M$ be the numbers of sources and channels, respectively.
The short-time Fourier transforms (STFTs)
of the multichannel source, observed, and estimated signals are defined as
\begin{align}
  \bm s_{ij}&=(s_{ij1},\ldots, s_{ijN})\T \in\cset^N,\\
  \bm x_{ij}&=(x_{ij1},\ldots, x_{ijM})\T \in\cset^M,\\
  \bm y_{ij}&=(y_{ij1},\ldots, y_{ijN})\T \in\cset^N,
\end{align}
where $i=1,\ldots, I$; $j=1,\ldots, J$; $n=1,\ldots, N$; and $m=1\ldots,M$
are the integral indices of the
frequency bins, time frames, sources, and channels, respectively,
and ${}\T$ denotes the transpose.
We assume the mixing system
\begin{align}
  \bm x_{ij} = \bm A_i \bm s_{ij},
\end{align}
where $\bm A_i = (\bm a_{i1}, \ldots, \bm a_{iN}) \in \cset^{M\times N}$
is a frequency-wise mixing matrix
and $\bm a_{in}$ is the steering vector for the $n$th source.
When $M=N$ and $\bm A_i$ is not a singular matrix,
the estimated signal $\bm y_{ij}$ can be expressed as
\begin{align}
  \bm y_{ij} = \bm W_i \bm x_{ij},
\end{align}
where $\bm W_i = \bm A_i^{-1} = (\bm w_{i1}, \ldots, \bm w_{iN})\Ht$
is the demixing matrix,
$\bm w_{in}$ is the demixing filter for the $n$th source,
and ${}\Ht$ denotes the Hermitian transpose.
ILRMA estimates both $\bm W_i$ and $\bm y_{ij}$
from only the observation $\bm x_{ij}$ assuming statistical independence
between $s_{ijn}$ and $s_{ijn'}$, where $n\ne n'$.

\subsection{Generative Model and Cost Function in GGD-ILRMA}
GGD-ILRMA utilizes the isotropic complex GGD.
The probability density function of the GGD is
\begin{align}
  p(z) = \dfrac{
    \beta
    }{
    2\pi r^2 \Gamma(2/\beta)
  }
  \exp\p[4]{-{\dfrac{\abs{z}^\beta}{r^\beta}}}
  ,\label{eq:pdf-GGD}
\end{align}
where $\beta$ is the shape parameter, $r$ is the scale parameter,
and $\Gamma(\cdot)$ is the gamma function.
Fig.~\ref{fig:ggd-shape} shows the shapes of the GGD with $\beta=2$ and $\beta=4$.
When $\beta=2$, \eqref{eq:pdf-GGD} corresponds to
the probability density function of the complex Gaussian distribution
with a \emph{mesokurtic} shape.
In the case of $0<\beta<2$, the distribution becomes
super-Gaussian with a {leptokurtic} shape.
In the case of $\beta>2$, the distribution becomes
sub-Gaussian with a {platykurtic} shape.

\begin{figure}[tp]
  \centering
  \begin{minipage}{.49\linewidth}
    \centering
    \includegraphics[width=\linewidth]{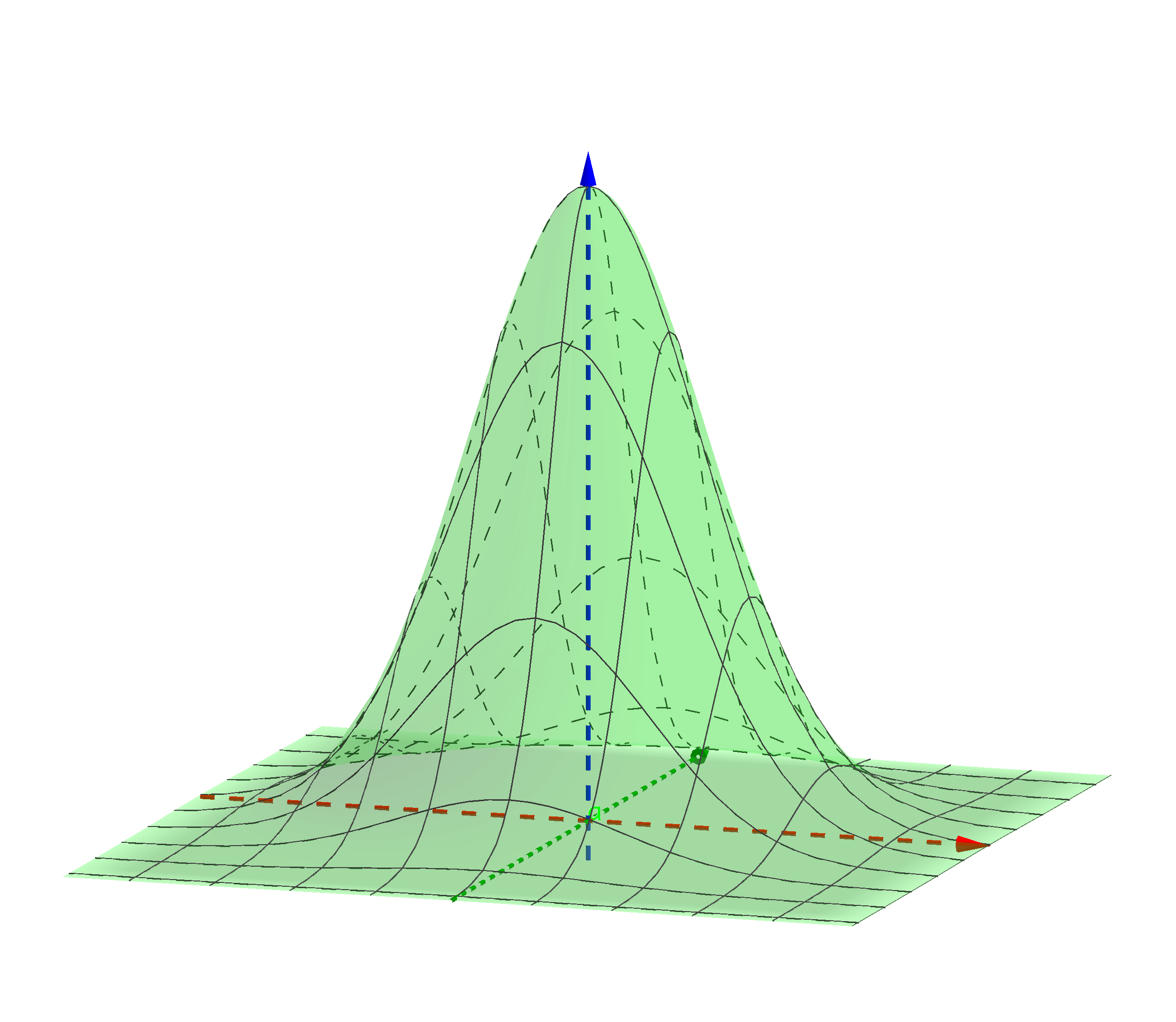}
    \subcaption{GGD with $\beta=2$.}
  \end{minipage}%
  \begin{minipage}{.49\linewidth}
    \centering
    \includegraphics[width=\linewidth]{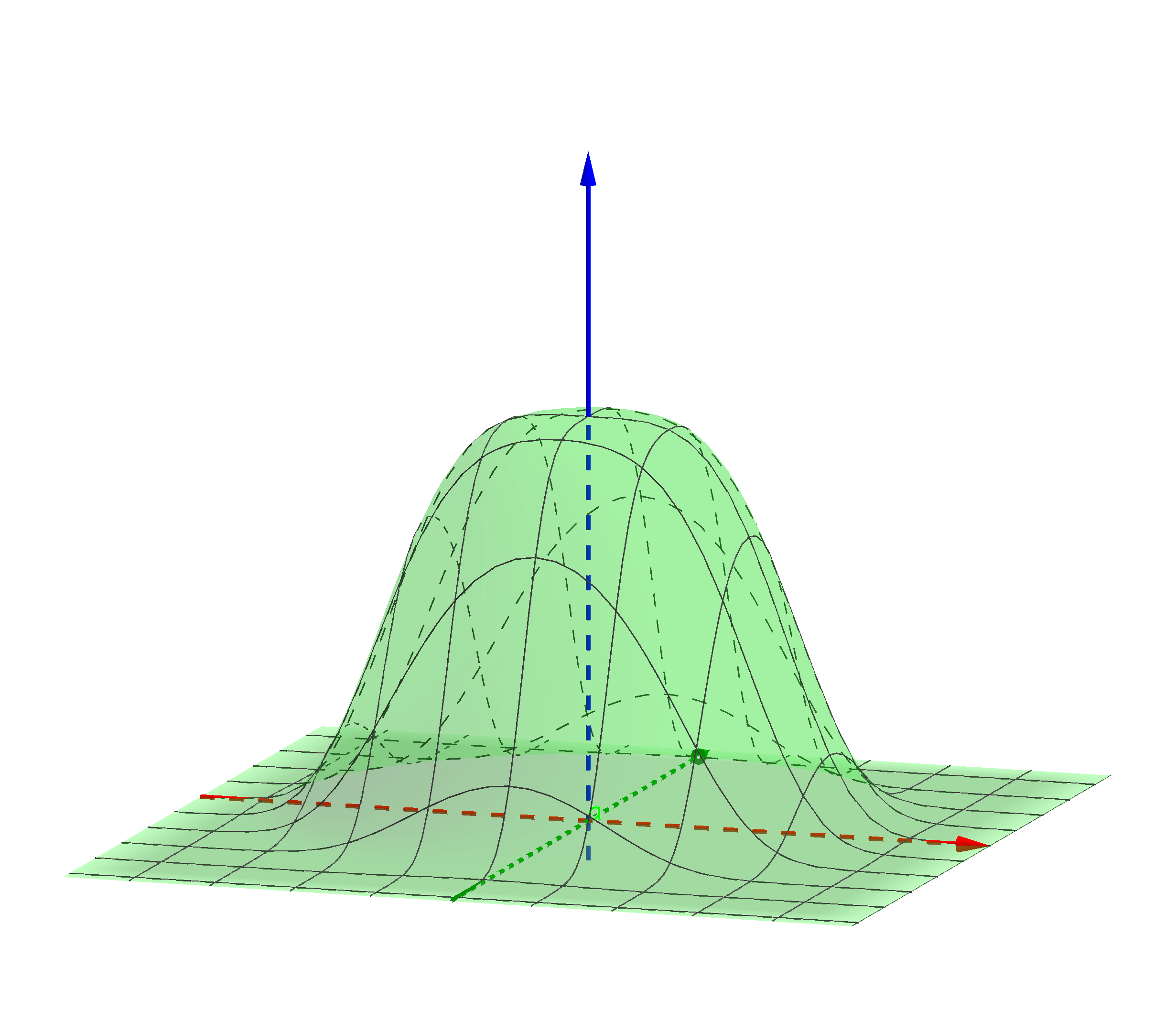}
    \subcaption{GGD with $\beta=4$.}
  \end{minipage}
  \caption{Examples of shapes of complex GGD.
  (a) When $\beta=2$, shape of GGD corresponds to that of Gaussian distribution.
  (b) When $\beta=4$, GGD is platykurtic.
  \label{fig:ggd-shape}%
  }
\end{figure}

In GGD-ILRMA,
we assume the time-variant isotropic complex GGD as the source generative model,
which is independently defined in each time-frequency slot as follows:
\begin{align}
  p(\bm Y_n) &= \prod_{i,j} p(y_{ijn}) \notag\\
  &= \prod_{i,j}
  \dfrac{\beta}{2\pi {r_{ijn}}^2 \Gamma(2/\beta)}
  \exp\p[4]{-{\dfrac{\abs{y_{ijn}}^\beta}{{r_{ijn}}^\beta}}},
  \label{eq:ggd-ilrma-generative-model}
  \\
  {r_{ijn}}^p &= \sum_k t_{ikn} v_{kjn},
  \label{eq:ggd-ilrma-source-model}
\end{align}
where the local distribution $p(y_{ijn})$ is defined as
a circularly symmetric complex Gaussian distribution,
i.e., the probability of $p(y_{ijn})$ only depends on
the power of the complex value $y_{ijn}$.
$r_{ijn}$ is the time-frequency-varying scale parameter
and $p$ is the domain parameter in NMF modeling.
Moreover,
the variables $t_{ikn}$ and $v_{kjn}$ are
the elements of the basis matrix $\bm T_n\in\rset_{\ge0}^{I\times K}$ and
the activation matrix $\bm V_n \in\rset_{\ge0}^{K\times J}$, respectively,
where $\rset_{\ge0}$ denotes the set of nonnegative real numbers.
$k = 1, \cdots, K$ is the integral index,
and $K$ is set to a much smaller value than $I$ and $J$,
which leads to the low-rank approximation.
From \eqref{eq:ggd-ilrma-generative-model},
the negative log-likelihood function $\mathcal{L}_{\mathrm{GGD}}$
of the observed signal $\bm x_{ij}$
can be obtained as follows
by assuming independence between sources:
\begin{align}
  \mathcal{L}_{\mathrm{GGD}} &= -2J\sum_i \log\abs{\det \bm W_i}\notag\\
  &\phantom{{}={}}
  +\sum_{i,j,n} \p{
    \dfrac{\abs{y_{ijn}}^\beta}{{r_{ijn}}^{\beta}} + 2\log r_{ijn}
  },
  \label{eq:ggd-ilrma-cost-function}
\end{align}
where $y_{ijn}=\bm w_{in}\bm x_{ij}$
and we used the transformation of random variables from $\bm x_{ij}$ to $\bm y_{ij}$.
The cost function of GGD-ILRMA~\eqref{eq:ggd-ilrma-cost-function}
coincides with that of IS-ILRMA when $\beta=p=2$.
By minimizing \eqref{eq:ggd-ilrma-cost-function}
w.r.t.\ $\bm W_{i}$ and $r_{ijn}$ under the limitation \eqref{eq:ggd-ilrma-source-model},
we estimate the demixing system
that maximizes the independence between sources.

Fig.\ \ref{fig:ilrma-concept} shows a conceptual model of GGD-ILRMA.
When each of the original sources has a low-rank spectrogram,
the spectrogram of their mixture should be more complicated,
where the rank of the mixture spectrogram
will be greater than that of the source spectrogram.
On the basis of this assumption, in GGD-ILRMA,
the low-rank constraint for each estimated spectrogram
is introduced by employing NMF.
The demixing matrix $\bm W_i$ is estimated
so that the spectrogram of the estimated signal
becomes a low-rank matrix modeled by $\bm T_n\bm V_n$, whose rank is at most $K$.
The estimation of $\bm W_i, \bm T_n$, and $\bm V_n$ can consistently be carried out
by minimizing \eqref{eq:ggd-ilrma-cost-function} in a fully blind manner.

\begin{figure}[tp]
\centering
\includegraphics[width=\columnwidth]{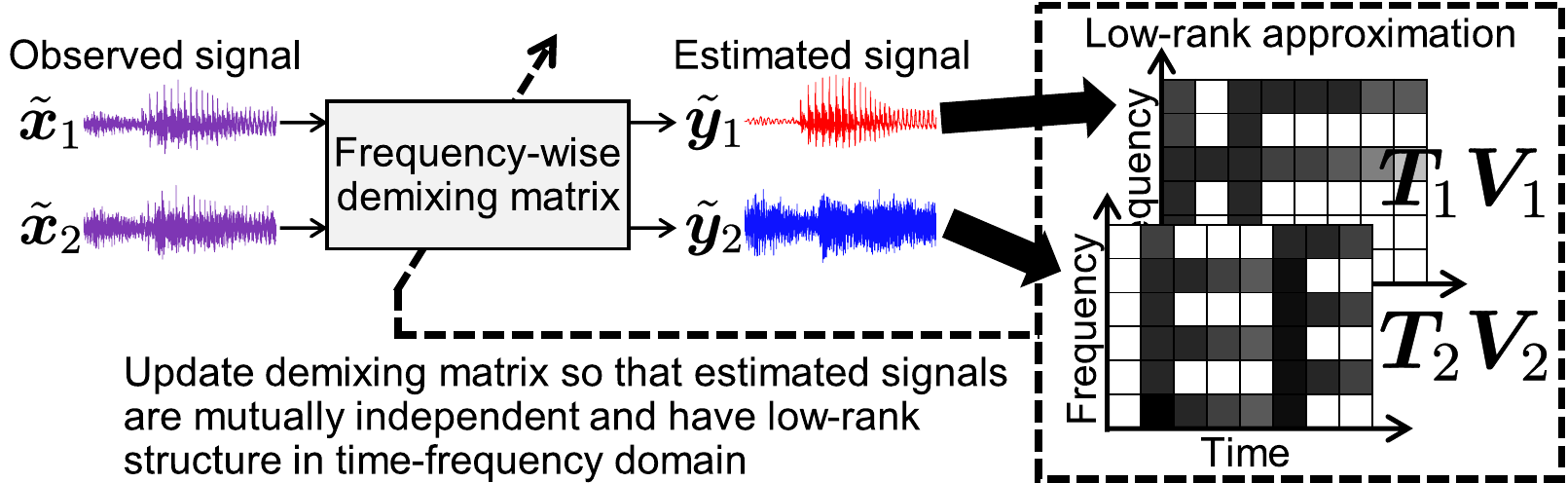}
\caption{Principle of source separation in GGD-ILRMA,
where $\tilde{\bm x}_{m}$ and $\tilde{\bm y}_{n}$ are
time-domain signals of $x_{ijn}$ and $y_{ijn}$, respectively.
\label{fig:ilrma-concept}%
}
\end{figure}


\section{Conventional Method}

\subsection{Update Rule for Demixing Matrix}
In IS-ILRMA, the demixing matrix $\bm W_i$ can be
efficiently updated by IP,
which can be applied
only when the cost function is the sum of $-\log\abs{\det \bm W_i}$
and the quadratic form of $\bm w_{in}$
(this corresponds to GGD-ILRMA with $\beta=2$).
In GGD-ILRMA, the update rule of $\bm W_i$ is also derived
using the majorization-minimization (MM) algorithm~\cite{DRHunter2000_MMalgorithm}.
When $0<\beta\le 2$,
we can use the following inequality of weighted arithmetic and geometric means
to design the majorization function:
\begin{align}
  \abs{y_{ijn}}^\beta
  \le \dfrac{\beta}{2} \dfrac{\abs{y_{ijn}}^2}{{\alpha_{ijn}}^{2-\beta}}
  + \p{1 - \dfrac\beta2} {\alpha_{ijn}}^\beta,
  \label{eq:am-gm-inequality}
\end{align}
where $\alpha_{ijn}$ is an auxiliary variable
and the equality of \eqref{eq:am-gm-inequality} holds if and only if
$\alpha_{ijn} = \abs{y_{ijn}}$.
By applying \eqref{eq:am-gm-inequality} to \eqref{eq:ggd-ilrma-cost-function},
the majorization function of \eqref{eq:ggd-ilrma-cost-function}
can be designed as
\begin{align}
  \mathcal{L}_{\mathrm{GGD}}
  &\le -2J \sum_i \log \abs{\det \bm W_i} \notag\\
  &\phantom{{}={}}
  + J \sum_{i, n} \bm w_{in}\Ht \bm F_{in} \bm w_{in} + \mathrm{const.},
  \label{eq:ggd-aux-function-less2}\\
  \bm F_{in}
  &= \dfrac{\beta}{2J}\sum_j
  \dfrac{1}{{\alpha_{ijn}}^{2-\beta} (\sum_{k} t_{ikn}v_{kjn})^{\frac\beta p}}
  \bm x_{ij} \bm x_{ij}\Ht,
  \label{eq:ggd-aux-function-less2-F}
\end{align}
where the constant term is independent of $\bm w_{in}$.
By applying IP to \eqref{eq:ggd-aux-function-less2}
and substituting the equality condition $\alpha_{ijn} = \abs{y_{ijn}}$
into \eqref{eq:ggd-aux-function-less2-F},
the update rule for $\bm W_i$ is derived as
\begin{align}
  \bm F_{in}
  &= \dfrac{\beta}{2J}\sum_j
  \dfrac{1}{\abs{y_{ijn}}^{2-\beta} (\sum_{k} t_{ikn}v_{kjn})^{\frac\beta p}}
  \bm x_{ij} \bm x_{ij}\Ht,
  \label{eq:ggd-update-w-less2-start}\\
  \bm w_{in} &\gets \bm F_{in}^{-1}\bm W_i^{-1} \bm e_n,\\
  \bm w_{in} &\gets \bm w_{in}\sqrt{1/(\bm w_{in}\Ht\bm F_{in} \bm w_{in})}.
  \label{eq:ggd-update-w-less2-end}
\end{align}
When $\beta = p = 2$, these update rules
\eqref{eq:ggd-update-w-less2-start}--\eqref{eq:ggd-update-w-less2-end}
coincide with those in IS-ILRMA.


Note that the update rules
\eqref{eq:ggd-update-w-less2-start}--\eqref{eq:ggd-update-w-less2-end}
are valid only when $0<\beta\le 2$,
which is equivalent to the condition
that the inequality \eqref{eq:am-gm-inequality} holds.
In fact, when $\beta > 2$, it is thought to be impossible to design
a majorization function to which we can apply IP
because no quadratic function w.r.t. $x$ can majorize $x^\beta$.

Conventional GGD-ILRMA achieves various types of source generative model:
when $\beta=2$, the entry of the source spectrogram follows the complex Gaussian distribution
(the same model as that of IS-ILRMA),
and when $\beta<2$, the entry of the source spectrogram follows
the complex leptokurtic distribution.
However, a source generative model that follows
a platykurtic complex GGD is yet to be achieved.
Since the marginal distribution of the time-variant super-Gaussian or Gaussian model
w.r.t. the time frame becomes only super-Gaussian,
any signals that follow sub-Gaussian distributions, such as music signals,
cannot be appropriately dealt with by the conventional GGD-ILRMA.

\subsection{Update Rule for Low-Rank Source Model%
\label{sse:conventional-w-update}}
The update rules for $\bm T_n$ and $\bm V_n$ in IS-ILRMA and GGD-ILRMA
can be derived by the MM algorithm,
which is a popular approach for NMF.
We obtain the following update rules:
\begin{align}
  t_{ikn} &\gets t_{ikn} \p{
  \dfrac{
    \beta \sum_j \dfrac{
      \abs{y_{ijn}}^\beta
      }{
      (\sum_{k'} t_{ik'n} v_{k'jn})^{\frac\beta p + 1}
    } v_{kjn}
    }{
    2 \sum_j \dfrac 1 {\sum_{k'} t_{ik'n} v_{k'jn}} v_{kjn}
  }
  }^{\frac p{\beta + p}},
  \label{eq:ggd-t-update}
  \\
  v_{kjn} &\gets v_{kjn} \p{
  \dfrac{
    \beta \sum_j \dfrac{
      \abs{y_{ijn}}^\beta
      }{
      (\sum_{k'} t_{ik'n} v_{k'jn})^{\frac\beta p + 1}
    } t_{ikn}
    }{
    2 \sum_j \dfrac 1 {\sum_{k'} t_{ik'n} v_{k'jn}} t_{ikn}
  }
  }^{\frac p{\beta + p}}.
  \label{eq:ggd-v-update}
\end{align}
See Appendix \ref{sse:update-rule-source-model}
for their detailed derivation.

In GGD-ILRMA, the cost function \eqref{eq:ggd-ilrma-cost-function}
is minimized by alternately repeating
the update of the demixing matrix $\bm W_i$ using
\eqref{eq:ggd-update-w-less2-start}--\eqref{eq:ggd-update-w-less2-end}
and the update of the low-rank source models $\bm T_n$ and $\bm V_n$
using \eqref{eq:ggd-t-update} and \eqref{eq:ggd-v-update}, respectively.
A monotonic decrease in the cost is guaranteed
over these update rules.


\newcommand{\scparam}{\eta}

\section{Proposed Method}
\subsection{Motivation\label{sse:motivation}}


The conventional methods~\cite{DKitamura2016_ILRMA,Kitamura2018_GGDILRMA}
have a limitation that
the source signal cannot be appropriately represented
when the signal follows a sub-Gaussian distribution.
In this paper,
we propose an MM-algorithm-based update rule for GGD-ILRMA
to maximize the likelihood based on the sub-Gaussian source model.
To derive the update rule, we also
extend the problem of demixing matrix estimation
into a more generalized form
and propose its optimization scheme based on GIP-HSM.

In contrast to the time-variant super-Gaussian or Gaussian model,
the marginal distribution of the time-variant sub-Gaussian model
w.r.t. the time frame
can be sub-Gaussian as well as Gaussian or super-Gaussian,
depending on its time variance of the scale parameter $r_{ijn}$.
For example, the time-variant sub-Gaussian model is platykurtic
when $r_{ijn}$ is constant w.r.t. the time frame,
whereas it becomes mesokurtic or leptokurtic
when $r_{ijn}$ fluctuates appropriately.
This shows that the proposed time-variant sub-Gaussian model covers
distributions with a wider range between platykurtic and leptokurtic shapes
than other conventional source models.
Therefore, the proposed GGD-ILRMA
is expected to have a robust performance
against the variation of the target signals.

\subsection{Derivation of GIP-HSM}
The cost functions of
IS-ILRMA and GGD-ILRMA are generalized as
\begin{align}
\mathcal{L}
&= \sum_{i=1}^I\lr[]{
-2\log\abs{\det\bm W_i}
+\sum_{n=1}^N f_{in}(\bm w_{in})}+\mathrm{const.},
\label{eq:cost-function-general}
\end{align}
where the constant term is independent of $\bm w_{in}$ and
$f_{in}\colon \cset^N\to\rset$ is a real-valued function
that satisfies the following three conditions:
\begin{enumerate}
\item $f_{in}(\bm w)$ is differentiable w.r.t.\ $\bm w$
at an arbitrary point.
\item $\Al c>0$,
$\Set{\bm w\in\cset^N}{f_{in}(\bm w)\le c}$ is convex
(naturally satisfied when $f_{in}(\bm w)$ is convex).
\item $\Al \scparam$, $f_{in}(\scparam \bm w) = \scparam^d f_{in}(\bm w)$,
namely, $f_{in}$ is a homogeneous function of degree $d$.
\end{enumerate}
The term $f_{in}(\bm w_{in})$ is determined
by the distribution of the source generative model,
e.g., $f_{in}(\bm w_{in}) = (1/J) \sum_{j}\p{{
  \abs*{\bm w_{in}\Ht \bm x_{ij}}^\beta
  }/{
  {r_{ijn}}^\beta
}}$ in GGD-ILRMA.

Here we show that
the optimization of \eqref{eq:cost-function-general} w.r.t.\ $\bm w_{in}$
is composed of ``direction optimization'' and ``scale optimization''
for each frequency bin.
Let $\bm u_{in}$ be an $N$-dimensional vector that satisfies $f_{in}(\bm u_{in})=1$.
Then, $\bm w_{in}$ can be uniquely represented as $\bm w_{in} = \scparam_{in}\bm u_{in}$,
where $\scparam_{in}$ is a positive real value.
By regarding $f_{in}(\bm u)$ as the norm of $\bm u$,
we can interpret $\bm u_{in}$ as a unit vector w.r.t. the $f_{in}$-norm.
Substituting $\bm w_{in} = \scparam_{in}\bm u_{in}$ into
\eqref{eq:cost-function-general},
the cost function is represented as
\begin{align}
\mathcal{L} &=
\sum_{i=1}^I\biggl[ -2\log\abs{\det\vtr*[\scparam_1\bm u_1,\cdots,\scparam_N\bm u_N]\Ht}\notag\\
&\hphantom{{}=\sum_{i=1}^N[]}+ \sum_{n=1}^N f_{in}(\scparam_{in} \bm u_{in}) \biggr]\notag\\
&=\sum_{i=1}^I\lr[]{
-2\log\p{\prod_n{\scparam_{in}}\cdot\abs{\det\bm U_i}}
+ \sum_{n=1}^N {\scparam_{in}}^d f_{in}(\bm u_{in})} \notag\\
&=\sum_{i=1}^I\lr[]{-2\log\abs{\det\bm U_i}
 +\sum_{n=1}^N \lr[]{-2\log{\scparam_{in}} + {\scparam_{in}}^d}},
\label{eq:cost-function-general-separated}
\end{align}
where $\bm U_{i} = \vtr*[\bm u_{i1},\cdots,\bm u_{iN}]\Ht$.
Therefore, the minimization of the cost function can be interpreted as
the minimization of $-\log\abs{\det \bm U_i}$ for each frequency bin
and the minimization of $-2\log\scparam_{in}+{\scparam_{in}}^d$
for each source and frequency bin.
These direction optimization and
scale optimization problems are independent of each other.
The optimal $\scparam_{in}$ can be calculated by a closed form because
the derivative of the cost function w.r.t.\ $\scparam_{in}$ can be written as
\begin{align}
\dif{}{\scparam_{in}}(-2\log \scparam_{in} + {\scparam_{in}}^d)
= -\dfrac{2}{\scparam_{in}} + d{\scparam_{in}}^{d-1}.
\label{eq:eta-derivative}
\end{align}
Hence,
letting the right side of \eqref{eq:eta-derivative} be zero,
we can obtain the optimal $\scparam_{in}$ as
\begin{align}
\scparam_{in} = \sqrt[d]{2/d}.
\end{align}

The actual difficulty in the optimization of the demixing matrix is
the direction optimization,
i.e.,\ the minimization of $-2\log\abs{\det \bm U_i}$.
Since minimizing $-\log x^2$ is equivalent to maximizing $x^2$,
we can reformulate this problem as
\begin{align}
  \text{maximize\ } {\abs{\det \bm U_i}^2}\quad
  \text{s.t.\ } f_{in}(\bm u_{in}) = 1.
  \label{eq:direction-opt-problem}
\end{align}
Since it is generally difficult to solve this problem by a closed form,
we apply an approach called vectorwise coordinate descent.
In this algorithm, we focus on $\bm u_{in}$, namely,
the Hermitian transpose of a particular row vector of $\bm U_i$.
By cofactor expansion,
we can deform the problem \eqref{eq:direction-opt-problem} as
\begin{align}
  \text{maximize\ } {\abs{\bm b_{in}\Ht \bm u_{in}}^2}\quad
  \text{s.t.\ } f_{in}(\bm u_{in}) = 1,
  \label{eq:objective-function-bv2}
\end{align}
where $\bm b_{in}$ is a column vector of the adjugate matrix
$\bm B_{i}=\vtr*[\bm b_{i1},\cdots,\bm b_{iN}]\Ht$ of $\bm U_{i}$.
Since $\bm b_{in}$ only depends on $\bm u_{in'}\,(n'\ne n)$
and is independent of $\bm u_{in}$,
\eqref{eq:objective-function-bv2} can be regarded as a function of $\bm u_{in}$
by fixing the other row vectors of $\bm U_i$.
Using the method of Lagrange multipliers, the stationary condition is
\begin{align}
\bm b_{in}(\bm b_{in}\Ht \bm u_{in})
+\lambda\pd{f_{in}}{\bm u_{in}\Ht}(\bm u_{in}) = 0,
\end{align}
where $\lambda$ is a Lagrange multiplier.
Since $(\bm b_{in}\Ht \bm u_{in})$ is a scalar,
the stationary condition can be rewritten as
\begin{align}
\pd{f_{in}}{\bm u_{in}\Ht} (\bm u_{in}) \parallel \bm b_{in},
\label{eq:pdfu-bin}
\end{align}
where the binary relation ``$\bm x\parallel \bm y$''
means that $\bm x$ is parallel to $\bm y$.
In \eqref{eq:pdfu-bin}, $\bm b_{in}$ is represented in terms of $\bm W_{in}$ as
\begin{align}
\bm b_{in} &= (\det\bm U_i)\bm U_i^{-1} \bm e_n \notag\\
&=(\det\bm U_i)(\mathrm{diag}(\scparam_{i1}^{-1},\ldots, \scparam_{iN}^{-1})\bm W_i)^{-1} \bm e_n \notag\\
&= (\det\bm U_i) \bm W_i^{-1}\mathrm{diag}(\scparam_{i1},\ldots, \scparam_{iN}) \bm e_n \notag\\
&= (\scparam_{in}\det\bm U_i)\bm W_i^{-1} \bm e_n \notag\\
&\parallel \bm W_{i}^{-1} \bm e_n,
\end{align}
where $\mathrm{diag}(c_1,\ldots, c_N)$ denotes
the $N\times N$ diagonal matrix whose $(n,n)$th element is $c_n$,
and $\bm e_n$ is an $N$-dimensional vector whose $n$th element is one
and whose other elements are zero.
Since $f_{in}$ is convex,
the stationary point of the objective function \eqref{eq:objective-function-bv2}
must also be the optimal point.
Therefore, the cost function~\eqref{eq:cost-function-general-separated}
that includes \eqref{eq:direction-opt-problem}
monotonically decreases with each update of the direction $\bm u_{in}$.

In conclusion, to minimize the cost function \eqref{eq:cost-function-general},
we update the vector $\bm w_{in}$ by the following two steps
in GIP-HSM.
(a) Find a vector $\bm w_{in}'$ that satisfies
\begin{align}
  \pd*{f_{in}}{\bm w_{in}\Ht} (\bm w_{in}') \parallel \bm W_{i}^{-1} \bm e_n.
  \label{eq:vcdhsm-update-a}
\end{align}
(b) Update $\bm w_{in}$ as
\begin{align}
  \npillar\bm w_{in}\gets \bm w_{in}'\sqrt[d]{2/(d\cdot f_{in}(\bm w_{in}'))}.
  \label{eq:vcdhsm-update-b}
\end{align}
The first step (a) and second step (b) correspond
to the direction and scale optimizations,
respectively.
Note that $\bm w_{in}'$ calculated in the first step
does not need to satisfy $f_{in}(\bm w_{in}') = 1$
because the scale is automatically adjusted in the second step.
In fact,
if $\bm w_{in}'$ is represented as $\bm w_{in}' = \eta_{in}'\bm u_{in}$,
the second step results in
\begin{align}
  \bm w_{in} \gets \eta_{in}'\bm u_{in}\cdot \sqrt[d]{2/(d\cdot {\eta_{in}'}^d)}
  = \bm u_{in}\sqrt[d]{2/d},
\end{align}
at which point both the direction and the scale are optimized.

\subsection{Sub-Gaussian ILRMA Based on GIP-HSM}
Using GIP-HSM,
we propose a new update rule in GGD-ILRMA
whose shape parameter $\beta$ is set to four
(time-variant sub-Gaussian model).
The cost function of GGD-ILRMA with $\beta=4$ is written as
\begin{align}
\mathcal{J} &= -2J\sum_i \log\abs{\det \bm W_i}
+\sum_{i,j,n}{
\dfrac{
  \abs*{\bm w_{in}\Ht \bm x_{ij}}^4
  }{
  {r_{ijn}}^4
}
} + \mathrm{const.},
\label{eq:ggd4-ilrma-cost-function}
\end{align}
where the constant term is independent of $\bm w_{in}$.
It seems possible
to apply GIP-HSM
by letting
$f_{in}(\bm w_{in}) = (1/J) \sum_{j}\p{{
  \abs*{\bm w_{in}\Ht \bm x_{ij}}^4
  }/{
  {r_{ijn}}^4
}}$.
In this case, however,
it is difficult to solve \eqref{eq:vcdhsm-update-a},
which is reduced to a cubic vector equation w.r.t.\ $\bm w_{in}'$.
Instead, we apply an MM algorithm to derive an update rule
that does not contain any cubic vector equations.
Hereafter, we prove the following theorem,
and then design a new type of majorization function of \eqref{eq:ggd4-ilrma-cost-function} using the theorem.

\begin{theorem}
  \label{thm:ggd4-inequality}
  Let
  $f(\bm w)=(1/J)\sum_{j=1}^J \p[1]{{\abs*{\bm w\Ht \bm x_j}^4} / {{r_j}^4}}$
  and $g(\bm w) = (\bm w\Ht \bm G \bm w)^2$,
  where $\bm G$ is defined in terms of a vector $\tilde{\bm w}$ as
  \begin{align}
    \bm H &= \vtr*[\frac1{r_{1}}\bm x_1,\cdots, \frac1{r_{J}}\bm x_J] ,\\
    \tilde{\bm q} &= \vtr*[\tilde q_1,\cdots,\tilde q_J]\T
    = \bm H\Ht \tilde{\bm w},\\
    \tilde{\bm Q} &=
    \begin{bmatrix}
      \norm{\tilde{\bm{q}}}^2&{-\tilde{q}_1\tilde{q}_2^*}&
      \cdots&{-\tilde{q}_1\tilde{q}_J^*}\\
      \rule{0pt}{3ex}{-\tilde{q}_2\tilde{q}_1^*}&\norm{\tilde{\bm{q}}}^2&
      \cdots&{-\tilde{q}_2\tilde{q}_J^*}\\
      \vdots&\vdots&\ddots&\vdots\\
      \rule{0pt}{3ex}{-\tilde{q}_J\tilde{q}_1^*}&
      {-\tilde{q}_J\tilde{q}_2^*}&
      \cdots&\norm{\tilde{\bm{q}}}^2
    \end{bmatrix},\\
    \bm G &=
    \dfrac{1}{\sqrt{J\sum_j\abs{\tilde q_j}^4}} \bm H \tilde{\bm Q} \bm H\Ht.
  \end{align}
  Then, $g(\bm w)$ satisfies $f(\bm w)\le g(\bm w)$ for arbitrary $\bm w$
  and the equality holds when $\bm w = \tilde{\bm w}$.
\end{theorem}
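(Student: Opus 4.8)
The plan is to change variables to the ``projected coordinates'' $\bm q := \bm H\Ht\bm w\in\cset^J$ (so that $\tilde{\bm q}=\bm H\Ht\tilde{\bm w}$, exactly as defined in the statement), which simultaneously diagonalizes both $f$ and $g$. Since the $j$th entry of $\bm H\Ht\bm w$ is $\bm x_j\Ht\bm w/r_j$, we have $|q_j|=|\bm w\Ht\bm x_j|/r_j$ and hence $f(\bm w)=(1/J)\sum_j|q_j|^4$. Likewise $\bm w\Ht\bm G\bm w=(\bm q\Ht\tilde{\bm Q}\bm q)/\sqrt{J\sum_j|\tilde q_j|^4}$ (note $\tilde{\bm Q}$ is Hermitian, so this is real), whence $g(\bm w)=(\bm q\Ht\tilde{\bm Q}\bm q)^2/(J\sum_j|\tilde q_j|^4)$. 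Once I also establish $\bm q\Ht\tilde{\bm Q}\bm q\ge 0$, proving $f(\bm w)\le g(\bm w)$ becomes equivalent to the scalar inequality
\[
  \sqrt{\Big(\sum_j|q_j|^4\Big)\Big(\sum_j|\tilde q_j|^4\Big)}\ \le\ \bm q\Ht\tilde{\bm Q}\bm q .
\]

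First I would expand the quadratic form using the decomposition, read off from the stated entries of $\tilde{\bm Q}$,
\[
  \tilde{\bm Q}=\norm{\tilde{\bm q}}^2\,\bm I-\tilde{\bm q}\tilde{\bm q}\Ht+\mathrm{diag}\big(|\tilde q_1|^2,\ldots,|\tilde q_J|^2\big),
\]
which gives $\bm q\Ht\tilde{\bm Q}\bm q=\norm{\tilde{\bm q}}^2\norm{\bm q}^2-|\bm q\Ht\tilde{\bm q}|^2+\sum_k|\tilde q_k|^2|q_k|^2$. Applying the triangle inequality $|\bm q\Ht\tilde{\bm q}|\le\sum_k|q_k||\tilde q_k|$, it then suffices to prove, for the nonnegative reals $\alpha_k:=|q_k|$, $\beta_k:=|\tilde q_k|$ and with $\bm\alpha=(\alpha_1,\ldots,\alpha_J)$, $\bm\beta=(\beta_1,\ldots,\beta_J)$,
\[
  \sqrt{\Big(\sum_k\alpha_k^4\Big)\Big(\sum_k\beta_k^4\Big)}\ \le\ \norm{\bm\alpha}^2\norm{\bm\beta}^2-\Big(\sum_k\alpha_k\beta_k\Big)^2+\sum_k\alpha_k^2\beta_k^2 .
\]
The right-hand side is nonnegative (Cauchy--Schwarz for the first two terms, plus a nonnegative third term), which simultaneously supplies the $\bm q\Ht\tilde{\bm Q}\bm q\ge 0$ used above.

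To finish this scalar step I would square both sides and invoke the Lagrange identity twice: $\big(\sum_k\alpha_k^4\big)\big(\sum_k\beta_k^4\big)=\big(\sum_k\alpha_k^2\beta_k^2\big)^2+\sum_{k<l}(\alpha_k\beta_l-\alpha_l\beta_k)^2(\alpha_k\beta_l+\alpha_l\beta_k)^2$ and $\norm{\bm\alpha}^2\norm{\bm\beta}^2-\big(\sum_k\alpha_k\beta_k\big)^2=\sum_{k<l}(\alpha_k\beta_l-\alpha_l\beta_k)^2$. Writing $P:=\sum_k\alpha_k^2\beta_k^2$ and $D:=\sum_{k<l}(\alpha_k\beta_l-\alpha_l\beta_k)^2$, the inequality reduces to $\sum_{k<l}(\alpha_k\beta_l-\alpha_l\beta_k)^2(\alpha_k\beta_l+\alpha_l\beta_k)^2\le D^2+2PD$, which follows termwise from $(\alpha_k\beta_l+\alpha_l\beta_k)^2-(\alpha_k\beta_l-\alpha_l\beta_k)^2=4\alpha_k\alpha_l\beta_k\beta_l\le 2(\alpha_k^2\beta_k^2+\alpha_l^2\beta_l^2)\le 2P$ (AM--GM), so each $(\alpha_k\beta_l+\alpha_l\beta_k)^2\le(\alpha_k\beta_l-\alpha_l\beta_k)^2+2P\le D+2P$. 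For the equality claim I would check that at $\bm w=\tilde{\bm w}$ we have $\bm q=\tilde{\bm q}$, so the triangle inequality is tight (because $q_k^*\tilde q_k=|\tilde q_k|^2\ge 0$) and, with $\alpha_k=\beta_k=|\tilde q_k|$, both sides of the scalar inequality equal $\sum_k|\tilde q_k|^4$; tracing back through the reduction, $f(\tilde{\bm w})=g(\tilde{\bm w})$.

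The part I expect to be the main obstacle is the scalar inequality above: it is strictly stronger than a naive Cauchy--Schwarz bound and genuinely uses the extra diagonal term $\sum_k\alpha_k^2\beta_k^2$ coming from $\mathrm{diag}(|\tilde q_k|^2)$ in $\tilde{\bm Q}$, and the right way in seems to be the pair of Lagrange identities followed by the termwise AM--GM estimate. Everything else --- the change of variables, the decomposition of $\tilde{\bm Q}$, and the equality bookkeeping --- is routine.
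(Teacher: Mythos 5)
Your proof is correct, and while it opens exactly as the paper does---passing to $\bm q = \bm H\Ht\bm w$ and reducing everything to the scalar inequality $(\sum_j\abs{q_j}^4)(\sum_j\abs{\tilde q_j}^4)\le(\bm q\Ht\tilde{\bm Q}\bm q)^2$---the core estimate is handled by a genuinely different route. The paper writes $\bm q\Ht\tilde{\bm Q}\bm q = x_1y_1-S$ with $S=\sum_{i\ne j}q_iq_j^*\tilde q_i^*\tilde q_j$, introduces $x_2=(\sum_{i\ne j}\abs{q_i}^2\abs{q_j}^2)^{1/2}$ and $y_2$ analogously, bounds $S\le x_2y_2$ by Cauchy--Schwarz, and then finishes in one stroke with the Brahmagupta identity $({x_1}^2-{x_2}^2)({y_1}^2-{y_2}^2)=(x_1y_1-x_2y_2)^2-(x_1y_2-x_2y_1)^2$. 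You instead decompose $\tilde{\bm Q}$ as $\norm{\tilde{\bm q}}^2\bm I-\tilde{\bm q}\tilde{\bm q}\Ht+\mathrm{diag}(\abs{\tilde q_k}^2)$, pass to moduli via the triangle inequality, and prove the resulting real inequality with two Lagrange identities plus a termwise AM--GM bound $(\alpha_k\beta_l+\alpha_l\beta_k)^2\le D+2P$. The two intermediate lower bounds are not identical: your $D+P$ sits between the paper's $x_1y_1-x_2y_2$ and the true value $x_1y_1-S$ (one can check $x_1y_1-x_2y_2\le D+P$ by Cauchy--Schwarz on the off-diagonal sums), so you are in effect proving a slightly sharper scalar inequality by a more granular, pairwise argument, whereas the paper's aggregated Cauchy--Schwarz-plus-Brahmagupta step is shorter. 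The equality bookkeeping at $\bm w=\tilde{\bm w}$ is handled equivalently in both. Everything in your argument checks out; there is no gap.
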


\begin{proof}
  Let $\bm q = \vtr*[q_1,\cdots, q_J]\T = \bm H \Ht\bm w$.
  $f(\bm w)$ and $g(\bm w)$ can be written as
  \begin{align}
    f(\bm w) &= \dfrac1J\sum_j \abs{q_j}^4,\\
    g(\bm w) &= \dfrac{1}{{J\sum_j\abs{\tilde q_j}^4}}
    \p[1]{\bm q\Ht \tilde{\bm Q} \bm q}^2.
  \end{align}
  Then, the objective inequality $f(\bm w)\le g(\bm w)$ holds if and only if
  \begin{align}
  \p[4]{\sum_j\abs{q_j}^4}\p[4]{\sum_j\abs{\tilde q_j}^4}
  \le \p[1]{\bm q\Ht \tilde{\bm Q} \bm q}^2.
  \label{eq:q4q4-qQq}
  \end{align}
  The quadratic form of the right side in \eqref{eq:q4q4-qQq} can be deformed as
  \begin{align}
    \bm q\Ht \tilde{\bm Q} \bm q
    &= \mathop{\mathrm{tr}}\p[1]{\tilde{\bm Q} \bm q\bm q\Ht} \notag\\
    &=\p[4]{\sum_j\abs{q_j}^2}\norm{\tilde{\bm q}}
    - \sum_{i\ne j} q_iq_j^* \tilde q_i^* \tilde q_j \notag\\
    &=\p[4]{\sum_j\abs{q_j}^2}\p[4]{\sum_j\abs{\tilde q_j}^2}
    - \sum_{i\ne j} q_iq_j^* \tilde q_i^* \tilde q_j.
  \end{align}
  Hence, we prove the following inequality hereafter:
  \begin{align}
    &\p[4]{\sum_j\abs{q_j}^4}\p[4]{\sum_j\abs{\tilde q_j}^4} \notag\\
    &\le\p{\p[4]{\sum_j\abs{q_j}^2}\p[4]{\sum_j\abs{\tilde q_j}^2}
    - \sum_{i\ne j} q_iq_j^* \tilde q_i^* \tilde q_j}^2.
    \label{eq:q4q4-qq2}
  \end{align}
  Let
  \begin{align}
    x_1 &= \sum_j \abs{q_j}^2,\\
    x_2 &= \sqrt{\sum_{i\neq j} \abs{q_i}^2\abs{q_j}^2},\\
    y_1 &= \sum_j \abs{\tilde q_j}^2,\\
    y_2 &= \sqrt{\sum_{i\neq j} \abs{\tilde q_i}^2\abs{\tilde q_j}^2}.
  \end{align}
  Since
  \begin{align}
    {x_1}^2 - {x_2}^2 = \sum_j \abs{q_j}^4 \ge 0,
    \label{eq:x1-x2}\\
    {y_1}^2 - {y_2}^2 = \sum_j \abs{\tilde q_j}^4 \ge 0,
    \label{eq:y1-y2}
  \end{align}
  and $x_1, x_2, y_1, y_2\ge 0$, it is obvious that
  \begin{align}
    x_1 y_1 - x_2 y_2 \ge 0.
    \label{eq:x1y1-x2y2}
  \end{align}
  Furthermore, we obtain the following inequality
  by applying the Cauchy--Schwarz inequality:
  \begin{align}
    x_2 y_2 &= \sqrt{
      \p[4]{\sum_{i\ne j}\abs{q_i}^2\abs{q_j}^2}
      \p[4]{\sum_{i\ne j}\abs{\tilde q_i}^2\abs{\tilde q_j}^2}
    } \notag\\
    &\ge \abs{
    {\sum_{i\ne j}{q_i q_j^* \tilde q_i^* \tilde q_j}}
    }
    \ge{
    {\sum_{i\ne j}{q_i q_j^* \tilde q_i^* \tilde q_j}}
    },
    \label{eq:y1y2-qiqj}
  \end{align}
  where we used the fact that
  \begin{align}
    \sum_{i\ne j}{q_i q_j^* \tilde q_i^* \tilde q_j}
    &= \sum_{i<j} \p{q_i q_j^* \tilde q_i^* \tilde q_j + q_i^* q_j\tilde q_i \tilde q_j^*} \notag\\
    &= 2\sum_{i<j} \Re[{q_i q_j^* \tilde q_i^* \tilde q_j}] \in\rset.
  \end{align}
  From \eqref{eq:x1y1-x2y2} and \eqref{eq:y1y2-qiqj},
  \begin{align}
    (x_1 y_1 - x_2 y_2)^2
    \le \p[4]{
    x_1 y_1 - {\sum_{i\ne j}{q_i q_j^* \tilde q_i^* \tilde q_j}}
    }^2.
    \label{eq:x1y1x2y2-x1y1qiqj}
  \end{align}
  Therefore, \eqref{eq:q4q4-qq2} can be proven
  by using the Brahmagupta identity,
  \eqref{eq:x1-x2}, \eqref{eq:y1-y2}, and
  \eqref{eq:x1y1x2y2-x1y1qiqj},
  as follows:
  \begin{align}
    &\p[4]{\sum_j\abs{q_j}^4}\p[4]{\sum_j\abs{\tilde q_j}^4} \notag\\
    &=({x_1}^2 - {x_2}^2)({y_1}^2 - {y_2}^2) \notag\\
    &=(x_1 y_1 - x_2 y_2)^2 - (x_1 y_2 - x_2 y_1)^2 \notag\\
    &\le (x_1 y_1 - x_2 y_2)^2 \notag\\
    &\le \p[4]{
    x_1 y_1 - {\sum_{i\ne j}{q_i q_j^* \tilde q_i^* \tilde q_j}}
    }^2 \notag\\
    &=\p{
    \p[4]{\sum_j \abs{q_j}^2}\p[4]{\sum_j \abs{\tilde q_j}^2}
    -{\sum_{i\ne j}{q_i q_j^* \tilde q_i^* \tilde q_j}}
    }^2.
  \end{align}
  It is easy to prove that the equality of \eqref{eq:q4q4-qq2} holds
  if $\bm w = \tilde{\bm w}$
  because then $\bm q = \tilde{\bm q}$ holds.
\end{proof}

Applying Theorem~\ref{thm:ggd4-inequality},
we can design a majorization function of \eqref{eq:ggd4-ilrma-cost-function}
as
\begin{align}
  \mathcal{J}&\le
  -2J\sum_i\log\abs{\det\bm W_i}
  +J\sum_{i, n} (\bm w_{in}\Ht \bm G_{in} {\bm w_{in}})^2
  +\mathrm{const.} \notag\\
  &=:\mathcal{J}^+,
  \label{eq:ggd4-ilrma-aux-function}
\end{align}
where
\begin{align}
  \bm H_{in}
  &= \vtr*[\frac1{r_{i1n}}\bm x_{i1},\cdots, \frac1{r_{iJn}}\bm x_{iJ}] ,\\
  \tilde{\bm q}_{in} &= \vtr*[\tilde q_{i1n},\cdots,\tilde q_{iJn}]\T
  = \bm H_{in}\Ht\tilde{\bm w}_{in},
  \label{eq:ggd4-ilrma-aux-function-q}\\
  \tilde{\bm Q}_{in} &=
  \begin{bmatrix}
    \norm{\tilde{\bm{q}}_{in}}^2&{-\tilde{q}_{i1n}\tilde{q}_{i2n}^*}&
    \cdots&{-\tilde{q}_{i1n}\tilde{q}_{iJn}^*}\\
    \rule{0pt}{3ex}{-\tilde{q}_{i2n}\tilde{q}_{i1n}^*}&\norm{\tilde{\bm{q}}_{in}}^2&
    \cdots&{-\tilde{q}_{i2n}\tilde{q}_{iJn}^*}\\
    \vdots&\vdots&\ddots&\vdots\\
    \rule{0pt}{3ex}{-\tilde{q}_{iJn}\tilde{q}_{i1n}^*}&
    {-\tilde{q}_{iJn}\tilde{q}_{i2n}^*}&
    \cdots&\norm{\tilde{\bm{q}}_{in}}^2\\
  \end{bmatrix},
  \label{eq:ggd4-ilrma-aux-function-Q}\\
  \bm G_{in} &=
  \dfrac{1}{\sqrt{J\sum_j\abs{\tilde q_{ijn}}^4}}
  \bm H_{in}\tilde{\bm Q}_{in} \bm H_{in}\Ht,
  \label{eq:ggd4-ilrma-aux-function-U}
\end{align}
where $\tilde{\bm w}_{in}$ is an auxiliary variable
and the equality of \eqref{eq:ggd4-ilrma-aux-function} holds
when $\bm w_{in} = \tilde{\bm w}_{in}$.
Since $g_{in}(\bm w_{in}) = (\bm w_{in}\Ht \bm G_{in} {\bm w_{in}})^2$
is a differentiable, convex, and homogeneous function of $\bm w_{in}$,
we can apply
GIP-HSM
to minimize $\mathcal{J}^+$.
The optimal condition for the direction of $\bm w_{in}$ is
determined as
\begin{align}
  \pd{g_{in}}{\bm w_{in}\Ht}(\bm w_{in}')
  = ({\bm w_{in}'}\Ht \bm G_{in} \bm w_{in}')\bm G_{in} \bm w_{in}'
  \parallel \bm W_{i}^{-1}\bm e_{n}.
  \label{eq:ggd4-direction-stationary-condition}
\end{align}
Since $({\bm w_{in}'}\Ht \bm G_{in} \bm w_{in}')$ is a scalar,
one of the solutions of \eqref{eq:ggd4-direction-stationary-condition} is
\begin{align}
  \bm w_{in}' = \bm G_{in}^{-1} \bm W_i^{-1} \bm e_n.
\end{align}
Substituting $\tilde{\bm w}_{in} = \bm w_{in}$ into
\eqref{eq:ggd4-ilrma-aux-function-q}--\eqref{eq:ggd4-ilrma-aux-function-U},
we obtain the following update rule for optimizing the direction of $\bm w_{in}$:
\begin{align}
  \bm H_{in} &= \vtr*[\frac1{r_{i1n}}\bm x_{i1},\cdots, \frac1{r_{iJn}}\bm x_{iJ}],
  \label{eq:ggd4-w-update-start}\\
  {\bm q}_{in} &= \vtr*[ q_{i1n},\cdots, q_{iJn}]\T
  = \bm H_{in} \Ht{\bm w}_{in},\\
  \bm Q_{in} &=
  \begin{bmatrix}
    \norm{{\bm{q}}_{in}}^2&{-{q}_{i1n}{q}_{i2n}^*}&
    \cdots&{-{q}_{i1n}{q}_{iJn}^*}\\
    \rule{0pt}{3ex}{-{q}_{i2n}{q}_{i1n}^*}&\norm{{\bm{q}}_{in}}^2&
    \cdots&{-{q}_{i2n}{q}_{iJn}^*}\\
    \vdots&\vdots&\ddots&\vdots\\
    \rule{0pt}{3ex}{-{q}_{iJn}{q}_{i1n}^*}&
    {-{q}_{iJn}{q}_{i2n}^*}&
    \cdots&\norm{{\bm{q}}_{in}}^2
  \end{bmatrix},
  \label{eq:ggd4-w-update-calcQ}\\
  \bm G_{in} &=
    \dfrac{1}{\sqrt{J\sum_j\abs{ q_{ijn}}^4}}
    \bm H_{in} \bm Q_{in} \bm H_{in}\Ht,
    \label{eq:ggd4-w-update-calcU}
    \\
  \bm w_{in} &\gets \bm G_{in}^{-1}\bm W_i^{-1} \bm e_n.
\end{align}
Finally, we operate the following scale optimization by applying
\eqref{eq:vcdhsm-update-b}:
\begin{align}
  {\bm q}_{in} &= \vtr*[ q_{i1n},\cdots, q_{iJn}]\T
  = \bm H_{in} \Ht{\bm w}_{in},\\
  \bm w_{in} &\gets \bm w_{in}\sqrt[4]{J/(2\textstyle\sum_j \abs{q_{ijn}}^4)},
  \label{eq:ggd4-w-update-end}
\end{align}
which is the scale optimization w.r.t.\ the $f_{in}$-norm.

In GGD-ILRMA with $\beta=4$,
the demixing matrix $\bm W_i$ is updated by
\eqref{eq:ggd4-w-update-start}--\eqref{eq:ggd4-w-update-end},
and the low-rank models $\bm T_n$ and $\bm V_n$
are updated by \eqref{eq:ggd-t-update} and \eqref{eq:ggd-v-update},
respectively.
These update rules are derived
using the MM algorithm and GIP-HSM,
thus guaranteeing a monotonic decrease of the cost function~\eqref{eq:ggd4-ilrma-cost-function}.



\section{Experimental Evaluation}

\subsection{BSS Experiment on Music Signals}
We compared the separation performance of
the proposed sub-Gaussian GGD-ILRMA ($\beta=4$)
with those of conventional
IS-ILRMA~\cite{DKitamura2016_ILRMA} and
GGD-ILRMA ($\beta<2$)~\cite{Kitamura2018_GGDILRMA}.
We artificially produced monaural dry music sources
of four melody parts
(melody 1: main melody, melody 2: counter melody, midrange, and bass)
using Microsoft GS Wavetable Synth,
where several musical instruments were chosen to play these melody parts~\cite{Kitamura2015_NMF,Kitamura_songKitamura}.
Six combinations of sources, Music 1--Music 6,
were constructed by selecting typical combinations
of instruments with different melody parts.
The combinations of dry sources
used in this experiment are shown in Table~\ref{tab:combination-drysrc}.
To simulate a reverberant mixture,
the observed signals were produced by convoluting the impulse response E2A,
which was obtained from the RWCP database~\cite{SNakamura2000_RWCP},
shown in Fig.~\ref{fig:impulse-response}.
As the evaluation score,
we used the improvement of the signal-to-distortion ratio (SDR)
~\cite{EVincent2006_BSSEval},
which indicates the overall separation quality.
An STFT was performed using a 128-ms-long Hamming window
with a 64-ms-long shift.
The shape parameter $\beta$ of the GGD was set to
1, 1.99, 2 in conventional GGD-ILRMA and 4 in the proposed method,
where $\beta=1.99$ is the best parameter according to \cite{Kitamura2018_GGDILRMA}.
The other conditions are shown in Table~\ref{tab:exp-condition}.

\begin{table}[tp]
\centering
\caption{Combinations of dry sources
\label{tab:combination-drysrc}%
}
\begin{tabular}{ccc}
\toprule
Index&Source 1& Source 2\\
\midrule
Music~1 & Fg.\ (bass)     & Ob.\ (melody 1)\\
Music~2 & Fg.\ (bass)     & Tp.\ (melody 1)\\
Music~3 & Ob.\ (melody 1) & Fl.\ (melody 2)\\
Music~4 & Pf.\ (midrange) & Ob.\ (melody 1)\\
Music~5 & Pf.\ (midrange) & Tp.\ (melody 1)\\
Music~6 & Tp.\ (melody 1) & Fl.\ (melody 2)\\
\bottomrule
\end{tabular}
\end{table}

\begin{figure}[tp]
\centering
\includegraphics[width=.7\columnwidth]{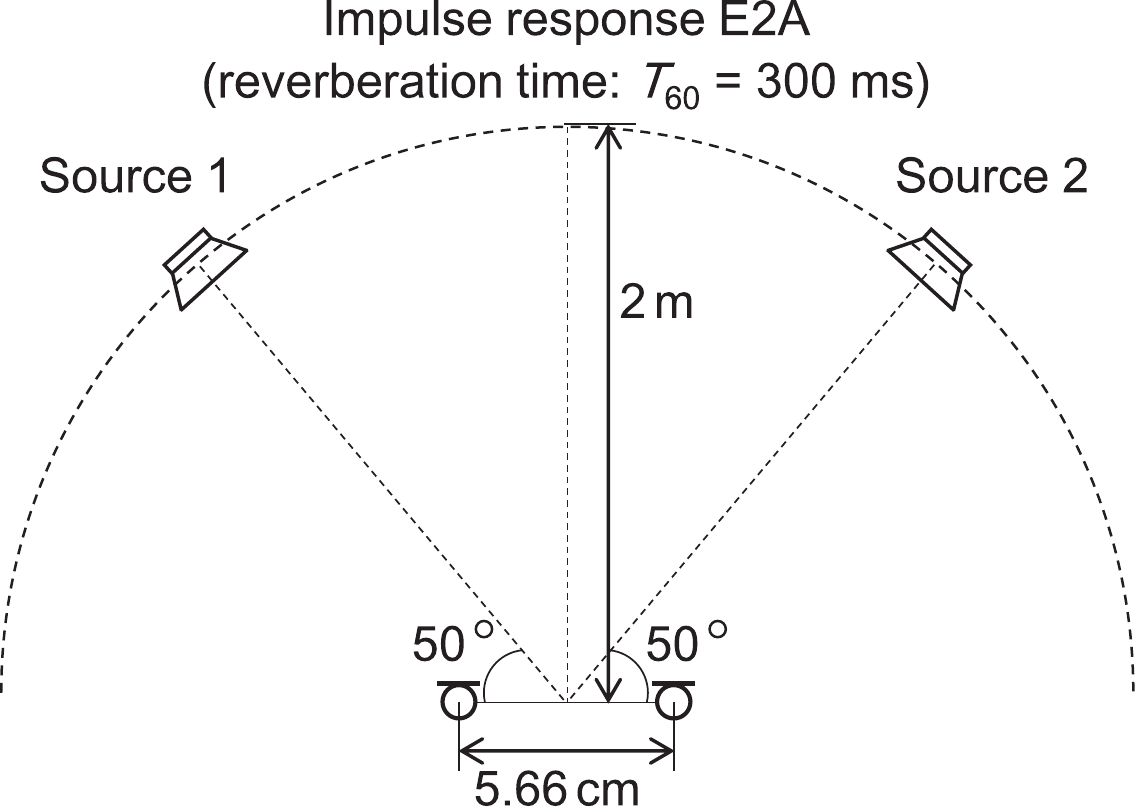}
\caption{Recording conditions of impulse response E2A ($T_{60} = \SI{300}{ms}$)
obtained from RWCP database~\cite{SNakamura2000_RWCP}.
\label{fig:impulse-response}%
}
\end{figure}

\begin{table}[tp]
\centering
\caption{Experimental conditions for music and speech source separation
\label{tab:exp-condition}%
}
\begin{tabular}{cc}
\toprule
Sampling frequency & $\SI{16}{kHz}$\\
Number of iterations& 1000\\
Number of bases& 20\\
Number of trials& 10\\
Domain parameter & $p=0.5$\\
Initial demixing matrix $\bm W_i$ & identity matrix\\
\fpillar \parbox{12em}{\centering
Entries of initial source model matrices $\bm T_n$ and $\bm V_n$
}& uniformly distributed random values\\
\bottomrule
\end{tabular}
\end{table}

Fig.~\ref{fig:result-music} shows the average SDR improvements
for Music 1--Music 6.
The proposed sub-Gaussian GGD-ILRMA on average outperforms the other conventional ILRMAs.
This result confirms that
the proposed sub-Gaussian source model is
more appropriate for dealing with music signals
than other conventional (super-)Gaussian source models.

\begin{figure}[tp]
\centering
\includegraphics[width=\columnwidth]{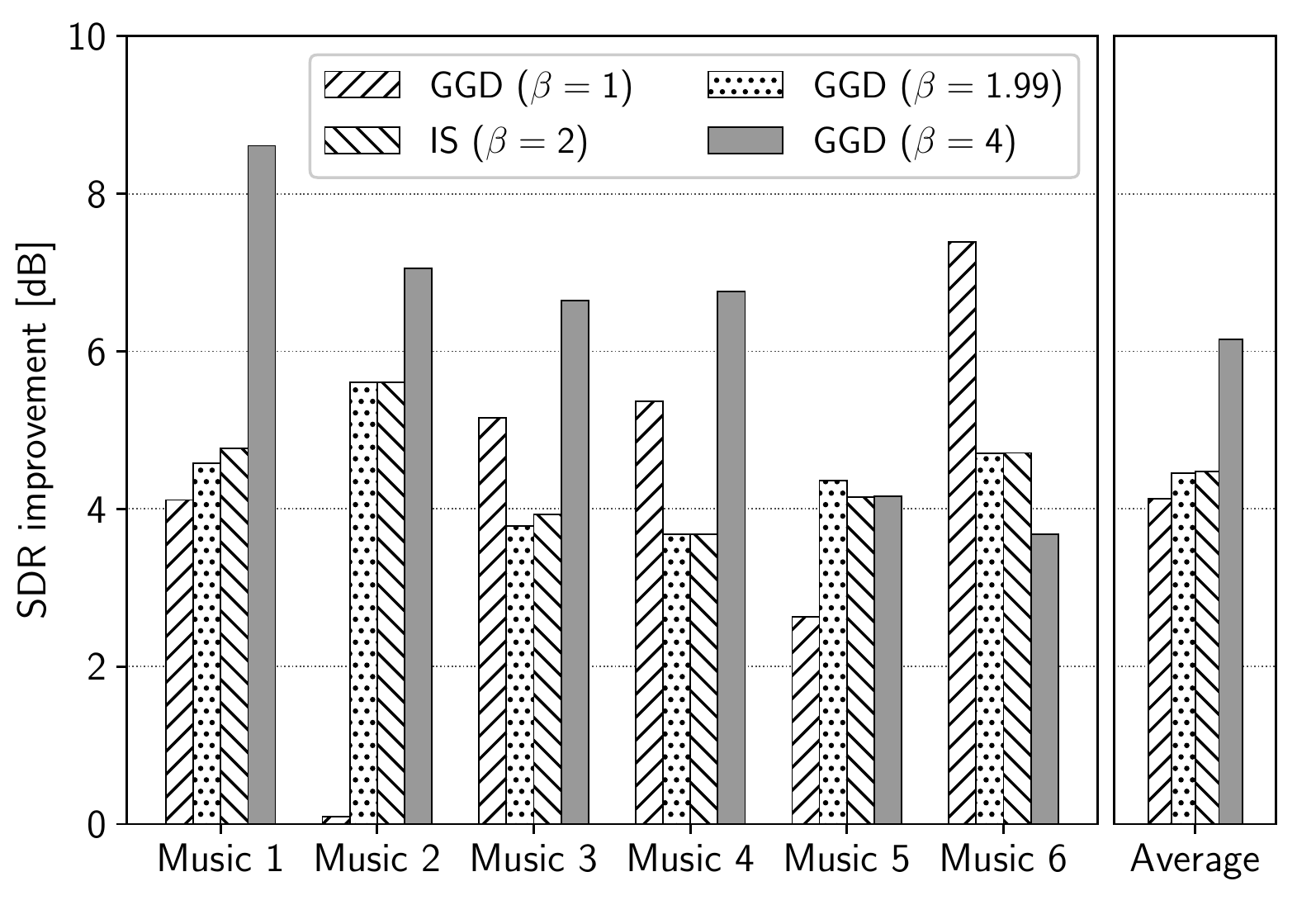}
\caption{Average SDR improvement of
GGD-ILRMA ($\beta=1$),
GGD-ILRMA ($\beta=1.99$),
IS-ILRMA,
and proposed GGD-ILRMA ($\beta=4$)
for six music signals.
``Average'' bar graph on right side shows average SDR improvement
among six music signals for each method.
\label{fig:result-music}%
}
\end{figure}

\subsection{BSS Experiment on Speech Signals}
We also confirmed
the separation performance of the proposed sub-Gaussian GGD-ILRMA
for speech signals,
which are less likely to follow a sub-Gaussian distribution
than music signals.
We used the monaural dry speech sources from the source separation task
in SiSEC2011~\cite{SAraki2012_SiSEC}, Speech~1--Speech~4.
An STFT was performed using a 256-ms-long Hamming window
with a 128-ms-long shift.
The other conditions were the same
as those of the music source separation experiment.

Fig.~\ref{fig:result-speech} shows the average SDR improvements
for Speech 1--Speech 4.
The proposed GGD-ILRMA on average outperforms the other conventional ILRMAs
even for speech signals,
which are expected to be sparse and follow super-Gaussian distributions.
This shows that the proposed time-variant sub-Gaussian model
can appropriately model super-Gaussian signals as well as sub-Gaussian signals
owing to its time-variant property, as described in Sect.~\ref{sse:motivation}.

\begin{figure}[tp]
\centering
\includegraphics[width=\columnwidth]{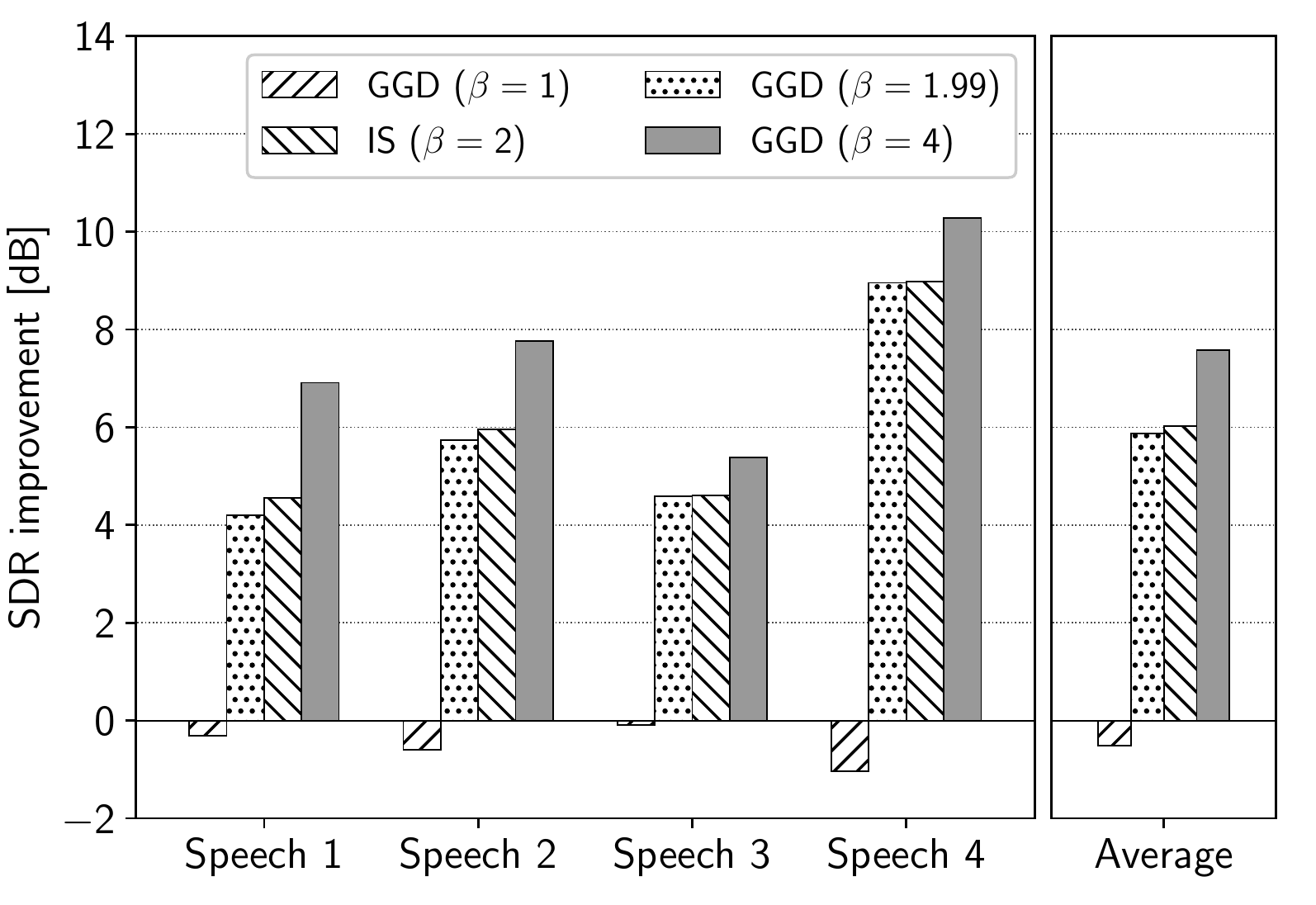}
\caption{Average SDR improvement of
GGD-ILRMA ($\beta=1$),
GGD-ILRMA ($\beta=1.99$),
IS-ILRMA,
and proposed GGD-ILRMA ($\beta=4$)
for four speech signals.
``Average'' bar graph on right side shows average SDR improvement
among four speech signals for each method.
\label{fig:result-speech}%
}
\end{figure}


\section{Conclusion}
We proposed a new type of ILRMA,
which assumes that the source signal
follows the time-variant isotropic complex sub-Gaussian GGD.
By using a new update scheme called
GIP-HSM,
we obtained a convergence-guaranteed update rule for the demixing matrix.
Furthermore, in the experimental evaluation,
we revealed the versatility of the proposed method,
i.e., the proposed time-variant sub-Gaussian source model
can deal with various types of source signal,
ranging from sub-Gaussian music signals to super-Gaussian speech signals.

\section*{Acknowledgment}
This work was partly supported by
SECOM Science and Technology Foundation
and JSPS KAKENHI Grant Numbers
JP17H06572 
and JP16H01735. 


%
%
%
%
%
%
%
%
\bibliographystyle{IEEEtran}
\bibliography{reference}


\appendix
\subsection{Derivation of Update Rule for Low-Rank Source Model%
\label{sse:update-rule-source-model}}

The update rules for $\bm T_n$ and $\bm V_n$ in GGD-ILRMA
can be derived by the MM algorithm.
In the MM algorithm, we minimize the majorization function
instead of the original cost function.
To derive the majorization function in GGD-ILRMA,
we introduce Jensen's inequality
\begin{align}
  \p{\sum_k t_{ikn}v_{kjn}}^{-\frac{\beta}{p}}
  \le \sum_k \phi_{ijnk}
  \p{\dfrac{t_{ikn}v_{kjn}}{\phi_{ijnk}}}^{-\frac{\beta}{p}}
  \label{eq:jensen-inequality}
\end{align}
and the tangent-line inequality
\begin{align}
  \log\sum_k t_{ikn}v_{kjn}
  \le \dfrac1{\psi_{ijn}}\p{\sum_k t_{ikn}v_{kjn} - 1} + \log \psi_{ijn},
  \label{eq:tangent-inequality}
\end{align}
where $\phi_{ijnk}>0$ and $\psi_{ijn}>0$
are auxiliary variables and $\phi_{ijnk}$ satisfies $\sum_k \phi_{ijnk}=1$.
The equalities of \eqref{eq:jensen-inequality} and \eqref{eq:tangent-inequality}
hold if and only if
\begin{align}
  \phi_{ijnk} &= \dfrac{
    t_{ikn} v_{kjn}
    }{
    \sum_{k'}t_{ik'n} v_{k'jn}
  },\\
  \psi_{ijn} &= \sum_k t_{ikn} v_{kjn},
\end{align}
respectively.
By substituting \eqref{eq:ggd-ilrma-source-model}
into \eqref{eq:ggd-ilrma-cost-function}
and applying \eqref{eq:jensen-inequality} and \eqref{eq:tangent-inequality}
to \eqref{eq:ggd-ilrma-cost-function},
the majorization function of \eqref{eq:ggd-ilrma-cost-function}
can be designed as
\begin{align}
  \mathcal{L}_{\mathrm{GGD}}
  &\le \sum_{i,j,n,k}\p{
    \dfrac{
      \phi_{ijnk}^{\frac\beta p + 1} \abs{y_{ijn}}^\beta
      }{
      (t_{ikn}v_{kjn})^{\frac\beta p}
    }
    + \dfrac{2t_{ikn} v_{kjn}}{p \psi_{ijn}}
  }
  + \mathrm{const.},
  \label{eq:ggd-tv-update-aux-function}
\end{align}
where the constant term is independent of $t_{ikn}$ and $v_{kjn}$.
By setting the partial derivatives of \eqref{eq:ggd-tv-update-aux-function}
w.r.t. $t_{ikn}$ and $v_{kjn}$ to zero,
we obtain the update rules \eqref{eq:ggd-t-update} and \eqref{eq:ggd-v-update},
respectively.


\end{document}